\newcommand{\be}{\begin{equation}}      
\newcommand{\ee}{\end{equation}}      
\newcommand{\bea}{\begin{eqnarray}}      
\newcommand{\eea}{\end{eqnarray}}
\newcommand{\tr}{\mathrm{tr}}
\newcommand{\Tr}{\mathrm{Tr}}
\newcommand{\ctext}[1]{\raise0.2ex\hbox{\textcircled{\scriptsize{#1}}}}
\newtheorem{theorem}{Theorem}[section]
\newtheorem{proposition}[theorem]{Proposition}
\theoremstyle{definition}
\theoremstyle{remark}
\begin{document}

\title{Quantum interactive proofs using quantum energy teleportation}
%\author{Kazuki Ikeda}
%\email[]{kazuki7131@gmail.com}
%\email[]{kazuki.ikeda@stonybrook.edu}
%\affiliation{Co-design Center for Quantum Advantage (C2QA), Stony Brook University, USA}
%\affiliation{Center For Nuclear Theory, Department of Physics and Astronomy, Stony Brook University, USA}

%\author{Adam Lowe}
%\email[]{a.lowe3@aston.ac.uk}
%\affiliation{College of Engineering and Physical Sciences, School of Informatics and Digital Engineering, Aston University, Birmingham B4 7ET, United Kingdom}

\author{Kazuki Ikeda\,\orcidlink{0000-0003-3821-2669} and~Adam Lowe \orcidlink{0000-0002-3714-4193}% <-this % stops a space
\thanks{K. Ikeda is with Co-design Center for Quantum Advantage (C2QA) and Center For Nuclear Theory, Department of Physics and Astronomy, Stony Brook University, USA e-mail:kazuki.ikeda@stonybrook.edu and kazuki7131@gmail.com website: \url{https://kazukiikeda.studio.site/}}% <-this % stops a space
\thanks{A. Lowe is with College of Engineering and Physical Sciences, School of Informatics and Digital Engineering, Aston University, Birmingham B4 7ET, United Kingdom e-mail:a.lowe3@aston.ac.uk.}}
\maketitle

%%%%%%%%%%%%%%%%%%%%%%%%%%%%%%%%%%%%%%
\begin{abstract}
We present a simple quantum interactive proof (QIP) protocol using the quantum state teleportation (QST) and quantum energy teleportation (QET) protocols. QET is a technique that allows a receiver at a distance to extract the local energy by local operations and classical communication (LOCC), using the energy injected by the supplier as collateral.
QET works for any local Hamiltonian with entanglement and, for our study, it is important that getting the ground state of a generic local Hamiltonian is quantum Merlin Arthur (QMA)-hard. The key motivations behind employing QET for these purposes are clarified. Firstly, in cases where a prover possesses the correct state and executes the appropriate operations, the verifier can effectively validate the presence of negative energy with a high probability (Completeness). Failure to select the appropriate operators or an incorrect state renders the verifier incapable of observing negative energy (Soundness). Importantly, the verifier solely observes a single qubit from the prover's transmitted state, while remaining oblivious to the prover's Hamiltonian and state (Zero-knowledge). Furthermore, the analysis is extended to distributed quantum interactive proofs, where we propose multiple solutions for the verification of each player's measurement. The results in the $N$-party scenario could have particular relevance for the implementation of future quantum networks, where verification of quantum information is a necessity. The complexity class of our protocol in the most general case belongs to QIP(3)=PSPACE, hence it provides a secure quantum authentication scheme that can be implemented in small quantum communication devices. It is straightforward to extend our protocol to Quantum Multi-Prover Interactive Proof (QMIP) systems, where the complexity is expected to be more powerful (PSPACE$\subset$QMIP=NEXPTIME). In our case, all provers share the ground state entanglement, hence it should belong to a more powerful complexity class QMIP$^*$.
\end{abstract}
\begin{IEEEkeywords}
Quantum Energy Teleportation, Quantum Teleportation, Quantum Interactive Proofs, Quantum Multi-Prover Interactive Proofs, Quantum Zero-Knowledge Proofs, Entanglement, Computational Complexity Theory.
\end{IEEEkeywords}

%%%%%%%%%%%%%%%%%%%%%%%%%%%%%%%%%%%%%%
\section{Introduction and Summary of Our Results}
Quantum interactive proofs play a crucial role in verifying and validating quantum information in various applications. In this paper, we present a simple and robust QIP system that leverages the quantum state teleportation ~\cite{PhysRevLett.70.1895,furusawa1998unconditional,2015NaPho...9..641P,takeda2013deterministic} and quantum energy teleportation protocols~\cite{HOTTA20085671,2009JPSJ...78c4001H,2015JPhA...48q5302T,2023arXiv230111884I,Ikeda:2023uni,Ikeda:2023xmf,PhysRevD.107.L071502}. QET, a technique enabling the extraction of local energy at a distance through LOCC, utilizes injected energy as collateral. The protocol works for any local Hamiltonian with entanglement, where determining the ground state of a generic local Hamiltonian is known to be QMA-hard,~\cite{10.1007/978-3-540-30538-5_31,broadbent2022qma} even for 1d spin chain Hamiltonians with nearest-neighbor interactions~\cite{Aharonov2009}. The protocols of QET have been rigorously proven and are being studied from various physical points of view, such as efficiently simulating phase transitions in quantum many-body systems~\cite{Ikeda:2023xmf,PhysRevD.107.L071502}. However, since the energy that can be transferred by QET is extremely small, it is better suited for use in a quantum way than in a classical way. 

In the previous paper by one of the authors, QET was extended to large-scale, long-distance quantum networks in combination with QST~\cite{2023arXiv230111884I}. The primary objective of this study is to elucidate the key motivations behind incorporating QET into systems which use QIPs and subsequently to utilize QET as a new authentication method on quantum networks.  Firstly, by employing QET, the verifier can effectively ascertain the presence of negative energy with a high probability when the prover possesses the correct state and performs the appropriate operations (Completeness). Conversely, failure to select the appropriate operators or an incorrect state renders the verifier unable to observe negative energy (Soundness). This characteristic ensures the integrity and accuracy of the verification process. Moreover, a system which utilises QIPs, guarantees a zero-knowledge property where the verifier solely observes some qubits sent by the prover using QST. Importantly, the verifier remains oblivious to both the prover's Hamiltonian and ground state, ensuring the confidentiality of sensitive information.

In addition to the QIP between one prover and one verifier, we extend our analysis to distributed QIPs, introducing multiple verifiers who verifies the prover's system. This aspect is particularly relevant in the context of $N$-party QET, as it holds promise for the implementation of future quantum networks, where the verification of quantum information is of utmost importance. Our method can be applied to establish a secure consensus building among $N$-parties~\cite{2022arXiv221102073I} and various distributed quantum systems~\cite{IKEDA2018199,10.1007/978-3-030-01174-1_58}.

The organization of this article and our original contributions are summarized as follows. First, we describe the most common protocols for QIP with QET (Sec.~\ref{sec:QIP}). Then, for the simplest case, we present an exact and optimized quantum circuit of QIP using the minimal QET model (Sec.~\ref{sec:minimal}), and examine Soundness, Zero-knowledge, and Completeness in detail. We also present a protocol that extends two-party QIP to distributed QIP with any number of verifiers. In comparison to existing secure authentication protocols, our method offers a relatively straightforward approach that can be readily implemented on current quantum computers and quantum networks. 
To validate the efficacy of our protocol, we conducted rigorous test using both IBM Qiskit and exact diagonalization of the Hamiltonian. The results not only confirmed its consistency with theoretical predictions but also demonstrated its compatibility with Noisy Intermediate-Scale Quantum (NISQ) devices\footnote{Readers can test our algorithm using a real quantum computer from the link \url{https://github.com/IKEDAKAZUKI/Quantum-Energy-Teleportation}}. By combining our approach with conventional quantum state teleportation, simultaneous authentication becomes feasible, without being restricted by the number or distance of authentication targets. Furthermore, our method is format-agnostic, accommodating both centralized and decentralized database structures utilized by the parties involved. Finally, we propose a new Quantum Multi-Prover
Interactive Proof (QMPI$^*$) with entangled provers by introducing many energy suppliers, who act as provers of the system (Sec.~\ref{sec:QMIP}). The complexity is truly more powerful than for QIPs, hence it is ultimately secure. Thus, it is an ultimately secure authentication system that cannot be broken even with the best conceivable quantum devices.

\section{Important Related Works on QIPs}
Watrous~\cite{814583} introduced QIPs in the centralized scenario as a variation of classical Interactive Proofs (IP). In QIPs, the verifier can perform polynomial-time quantum computation instead of classical computation, and the prover and verifier can exchange quantum bits instead of classical bits. Kitaev and Watrous~\cite{10.1145/335305.335387} demonstrated that QIPs, which comprises languages decidable by a quantum interactive protocol with a polynomial number of interactions, is included in EXP, the class of languages decided in exponential time. They also demonstrated that any QIP protocol with a polynomial number of interactions can be parallelized into three turns (QIP = QIP(3)). Here QIP(3) is a protocol in which messages are exchanged three times between the prover and the verifier.

Jain, Ji, Upadhyay, and Watrous~\cite{10.1145/2049697.2049704} further improved this containment by showing that a QIP is actually contained in PSPACE, implying that a QIP collapses to the complexity class IP (QIP = IP = PSPACE). Although this result establishes that quantum interactive proofs are not more powerful than classical interactive proofs, there is an intriguing property unique to quantum interactive proofs: the number of interactions can be significantly reduced and protocols can be simpler. Watrous initially proved that any language in PSPACE can be decided by a three-turn QIP protocol~\cite{PhysRevA.65.012310}. Marriott and Watrous~\cite{Marriott2005} additionally showed that the verifier's turn in QIP(3) protocols can be replaced by a classical 1-bit (QIP(3) = QMAM) and this is indeed our case. Moreover distributed QIP~\cite{gall2022distributed} was proposed by developing classical IP~
\cite{55f26229cb9e4e719ad6aa994bc26328}.

The extension of an IP to the case where there are multiple provers is called a MIP~\cite{10.1145/62212.62223}. The complexity of MIPs is equivalent to NEXPTIME~\cite{Babai1991}, therefore it is expected to be more powerful than QIP. Moreover a quantum version of MIPs is called a QIP, where provers do not share entanglement and use shared random classical bits. The complexity is equivalent to QMIP=MIP= NEXPTIME~\cite{10.1007/3-540-36136-7_11,doi:10.1137/090751293}. When provers share the entanglement, a MIP is extended to MIP$^*$, which is proven not equal to a MIP hence it is truly more powerful than a MIP (MIP$\subsetneq$MIP$^*$)~\cite{2019arXiv190405870N}. QMIP with provers that share the entanglement state is called a QMIP$^*$ and is known to be as powerful as a MIP$^*$~\cite{reichardt2013classical}.

\section{\label{sec:QIP}Local Hamiltonian problem and quantum energy teleportation}
\subsection{General protocol of quantum energy teleportation}
We first explain the QET protocol in the most general manner. For quantum circuit implementations of particular problems, please see ~\cite{2023arXiv230111884I,Ikeda:2023uni,Ikeda:2023xmf,PhysRevD.107.L071502} and Fig.~\ref{fig:circuit}. We consider a local Hamiltonian as $H=\sum_{n=0}^{N-1}H_n$,
where $H_n$ is the local Hamiltonian interacting with some neighboring qubits and obeys the constraint
\begin{align}
\begin{aligned}
\label{eq:condition}
\bra{g}H\ket{g}=\bra{g}H_n\ket{g}=0,~\forall n\in \{1,\cdots,N\},
\end{aligned}    
\end{align}
where $\ket{g}$ is the ground state of the full Hamiltonian $H$. Note that $\ket{g}$ is not always the ground state of local $H_n$. It is important to note that $\ket{g}$ is an entangled state in general. One can always add or subtract constant values to maintain the constraint~\eqref{eq:condition}. Moreover, since $\ket{g}$ is the ground state, it is crucial that any non-trivial (local) operations to $\ket{g}$ including measurement increase the energy expectation value.

In what follows we describe the QET protocol. Let Alice be a supplier of energy and Bob be a receiver. Alice measures her Pauli operator $\sigma_{A}$, using her projective measurement operator $P_{A}(\mu)=\frac{1}{2}(1+\mu \sigma_{A})$ and obtains either $\mu=-1$ or $\mu=+1$. Initially, the injected energy $E_A$ is localized around subsystem $A$, but Alice cannot extract it from the system solely through her operations at $A$. However using LOCC, Bob can extract some energy from his local system. 

By classical communication, Alice sends her measurement result $\mu$ to Bob, who performs a conditional operation $U_{B}(\mu)$ to his state and measures his local Hamiltonian $H_{B}$. He can use
\begin{equation}
\label{eq:operation}
    U_{B}(\mu)=\cos\theta I-i\mu\sin\theta\sigma_{B},
\end{equation}
where $\theta$ obeys 
\begin{align}
\label{eq:theta}
    \cos(2\theta)=\frac{\xi}{\sqrt{\xi^2+\eta^2}},~
    \sin(2\theta)=-\frac{\eta}{\sqrt{\xi^2+\eta^2}},
\end{align}
where
\begin{align}
\label{eq:params}
\xi=\bra{g}\sigma_{B}H\sigma_{B}\ket{g},~\eta=\bra{g}\sigma_{A}\dot{\sigma}_{B}\ket{g},
\end{align}
with $\dot{\sigma}_{B}=i[H,\sigma_{B}]$. The local Hamiltonian should satisfy $[H,\sigma_{B}]=[H_{B},\sigma_{B}]$. 
The average quantum state $\rho_\text{QET}$, which is a mixed state, is obtained after Bob operates $U_{B}(\mu)$ to $\frac{1}{\sqrt{p(\mu)}}P_{A}(\mu)\ket{g}$, where $p(\mu)$ is the normalization factor.

Then the density matrix $\rho_\text{QET}$ after Bob operates $U_{B}(\mu)$ to $P_{A}(\mu)\ket{g}$ is 
\begin{equation}
\label{eq:rho_QET}
    \rho_\text{QET}=\sum_{\mu\in\{-1,1\}}U_{B}(\mu)P_{A}(\mu)\ket{g}\bra{g}P_{A}(\mu)U^\dagger_{B}(\mu). 
\end{equation}
The expected energy at Bob's local system can be evaluated as 
\begin{equation}
\label{eq:QET}
    \langle E_{B}\rangle=\Tr[\rho_\text{QET}H_{B}]=\frac{1}{2}\left[\xi-\sqrt{\xi^2+\eta^2}\right], 
\end{equation}
which is negative if $\eta\neq 0$. It is expected that if there is no energy dissipation, the positive energy of $-\langle E_{B}\rangle$ is transferred to Bob's device after the measurement due to energy conservation. 

Here let us give some remarks on QET. Although QET bears a conceptual resemblance to QST, it is crucial to highlight that in QET, it is classical information, rather than energy, that is transmitted. Additionally, the intermediate subsystem situated between Alice and Bob does not undergo excitation by the energy carriers of the system during the brief duration of the QET process. As a result, the time required for energy transport through QET is significantly shorter than the time it takes for heat generation in the system's natural time progression. Through operations performed on his local system, Bob can extract energy from a system based on the classical information conveyed by Alice. When we refer to energy teleportation or energy transfer, we mean that Bob can receive energy much more rapidly (at the speed of light) than the energy can be transmitted from Alice to Bob in the natural course of the system's evolution.

\subsection{Quantum Interactive Proof using QET}
\begin{figure}[H]
    \centering
    \includegraphics[width=\linewidth]{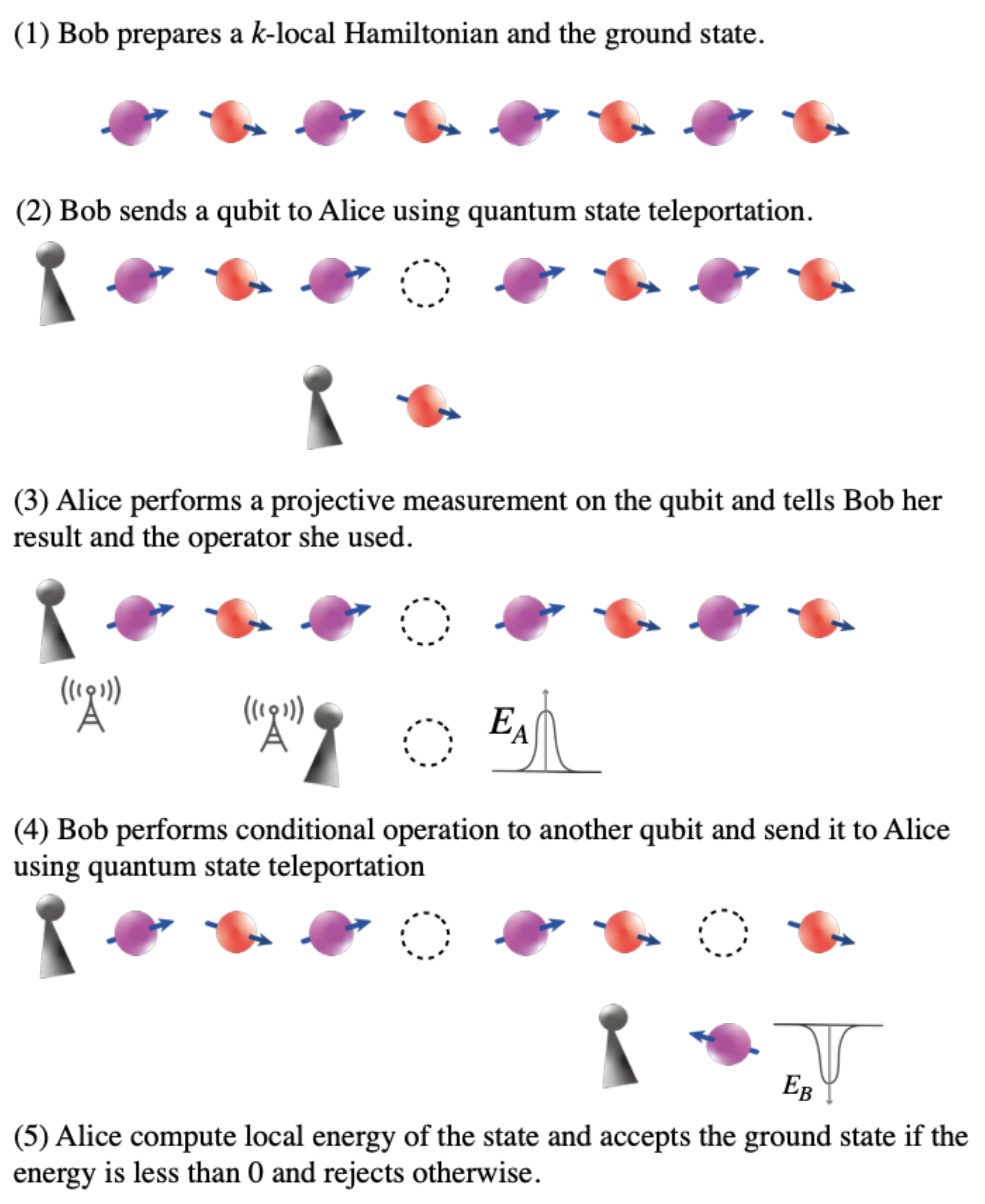}
    \caption{Schematic picture of quantum interactive proof by using both QET and QST.}
    \label{fig:QIP_QET}
\end{figure}

The prover can encode $(k_1,\cdots,k_N,h_1,\cdots,h_{N-1})$ into the ground state of a Hamiltonian, for example,
\begin{equation}
    H=\sum_{i=1}^Nk_iZ_i+\sum_{j=1}^{N-1}h_jX_jX_{j+1}.
\end{equation}
We will address the simplest case later. 

\begin{algorithm}
\caption{QIP by Quantum Energy Teleportation}\label{alg:cap}
\hspace*{\algorithmicindent} \textbf{Input} The ground state $\ket{g}$ of a Hamiltonian~\eqref{eq:condition}\\
    \hspace*{\algorithmicindent} \textbf{Output} Accept or Reject
\begin{algorithmic}
\Require $n_\text{shot}>0$
\For{$n <n_\text{shot}$}
\State Perform the projective measurement $P_{A}(\mu)$ to $\ket{g}$. 
\If{$\mu=-1$} 
    \State Apply $U_B=U(-1)$ to the state
\Else
\State Apply $U_B=U(+1)$ to the state
\EndIf
\State Measure $H_{B}$
\EndFor \Comment{$\langle E_{B}\rangle=\Tr[\rho_\text{QET}H_{B}]$ can be obtained}
\If{$\langle H_{B}\rangle<0$}
\State Accept $\ket{g}$
\Else
\State Reject $\ket{g}$
\EndIf
\end{algorithmic}
\end{algorithm}
The schematic picture of a QIP using QET is shown in Fig.~\ref{fig:QIP_QET} and the general procedure is explained as follows. 
\begin{enumerate}
\item[Step 1:] Bob creates a ground state and sends a local state $\ket{g_{A}}$ (witness) to Alice using QST.
\item[Step 2:] Alice applies a projective measurement to $\ket{g_{A}}$ and tells Bob her measurement result $\mu$ and her operator $\sigma_{A}$. 
\item[Step 3:] Bob operates $U_{B}(\mu)$ to a local state $\ket{g_{B}}$ and sends $\ket{g_{B}}$ to Alice using QST. 
\item[Step 4:] Alice measures the state $\ket{g_{B}}$ in the basis specified by Bob.
\item[Step 5:] She accepts the state $\ket{g}$ if she observes negative energy $\langle E_B\rangle<0$ on her local system and otherwise rejects it. 
\end{enumerate}

Here we give brief remarks on the protocol. The major difference from the conventional QET protocol is that Alice and Bob exchange local states twice by QST. In the conventional protocol, Alice is the supplier of energy and Bob is the receiver of energy, but in our new protocol, Alice injects energy in Step 2 and receives energy in another qubit in Step 5. This time, Bob does not receive any energy but only performs conditional operations according to the results of Alice's projective measurements. Alice and Bob need to repeat the process of Steps 1 through 5 several times, but the total circuit length required for two quantum teleportations is about 10, so the protocol is completed instantaneously for practical purposes. For example, in IBM's superconducting quantum computer, the duration of a single qubit operation is $O(10)~ns$ and the CNOT duration is $O(100)~ns$. Since only four CNOTs are used in two QSTs (see Fig.~\ref{fig:circuit} for the circuit), the duration of the entire process is $O(500)~ns$, and even if the process is repeated 1000 times, it would only take $O(500)~\mu s$, which is quite fast. Moreover, except for the QST part, Alice does not need to have a universal quantum computer herself to perform this protocol, since it requires only single qubit operations and measurement. Furthermore, note that traditional key exchange is not performed during the process.  Finally as can be seen from the process (Steps 1 to 5), the prover and the verifier exchange the message three times, but the verifier sends only the classical one bit. This indeed shows our protocol belongs to QIP(3). In fact, when the verifier checks the witness, the values $\mu\in\{-1,+1\}$ obtained by measurement are completely random each time and the verifier makes $\mu$ publicly available. Even if the prover has ultimate quantum computational resources, it is impossible to predict in advance which values of $\mu$ will appear. This is the same as in the Arthur–Merlin protocol~\cite{10.1145/22145.22192}.

\begin{tcolorbox}
[colframe = blue!50!black, 
    colback = white, 
    title = Objective of the protocol]
Bob (prover) encodes a tuple of numbers $k=(k_1,k_2,\cdots,k_m)$ into the quantum state $\ket{g(k)}$ of a local Hamiltonian $H=\sum_{n}H_n$ and wants to prove $\ket{g}$ is the genuine ground state without revealing any information about $\ket{g(k)}$ to Alice (verifier). 
\end{tcolorbox}

To avoid misunderstanding, we emphasize that, in this study, QET is treated from a viewpoint of quantum information engineering and cryptography, so we do not discuss the actual physical system or its physical processes in a rigorous manner. For us, the important thing is that the algorithms presented in this study can be implemented in real quantum devices.

Here we explain the reasons why QET can be used for QIPs and ground-state authentication. First of all, in order for Alice to be able to observe the negative energy, Bob needs to find the correct $\theta$ in eq.\eqref{eq:theta}, which is information that only Bob knows, by using eq.~\eqref{eq:params}. If Bob has a correct state $\ket{g}$ and performs the correct operations, then Alice can verify negative energy with a high probability (\textbf{Completeness}). Second, Alice can use any operator $\sigma_{nA}$ for her projective measurement, and the value of $\mu$ that Alice detects is completely random each time. Third, in the process of QET, Bob must continue to use the same $\theta$ and $\ket{g}$ (therefore Alice also uses the same $\sigma_{nA}$), and if the appropriate $\theta$ is not chosen or $\ket{g}$ is not the ground state, Alice cannot observe negative energy (\textbf{Soundness}). Fourth, since finding the ground state of a general $k$-local Hamiltonian is QMA-hard, if Alice can observe negative energy, it guarantees with a very high probability that $\ket{g}$ is the correct ground state of the Hamiltonian. Finally, Alice only observes a single qubit of the state sent by Bob, and she never learns anything about Bob's Hamiltonian and $\ket{g}$ (\textbf{Zero-knowledge}). Therefore a quantum zero-knowledge proof of Bob's system can be realized by QET.

\begin{figure*}
    \centering
    \includegraphics[width=\linewidth]{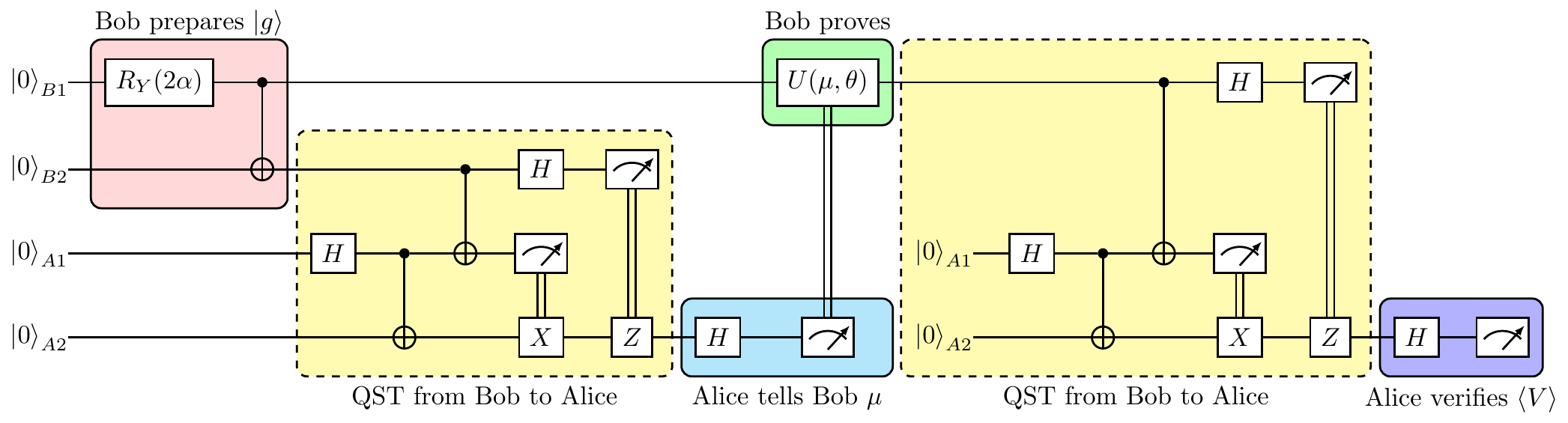}
    \caption{The full quantum circuit for quantum interactive proofs using the minimal model of quantum energy teleportation. Circuit complexity is 13 and circuit width is 4. In the circuit, each $H$ is an Hadamard gate.}
    \label{fig:circuit}
\end{figure*}

\section{\label{sec:minimal}QIP using minimal QET}

\subsection{Quantum Circuit of QIP}
Here we establish a QIP using the simplest model of QET. For this, we use the following Hamiltonian
\begin{align}
\label{eq:ham_min}
    H&=H_1+H_2+V,\\
    H_n&=hZ_n+\frac{h^2}{\sqrt{h^2+k^2}},~(n=1,2)\\
    V&=2kX_1X_2+\frac{2k^2}{\sqrt{h^2+k^2}},
\end{align}
where $k$ and $h$ are positive real values and $n=1,2$ are prover's qubits. The constant terms are added so that the ground state $\ket{g}$ of $H$ returns the zero mean energy for all local and global Hamiltonians: 
\begin{equation}
\label{eq:ground}
\bra{g}H\ket{g}=\bra{g}H_1\ket{g}=\bra{g}H_2\ket{g}=\bra{g}V\ket{g}=0. 
\end{equation}
This can be explicitly checked using the ground state of $H$
\begin{equation}
\label{eq:groundstate}
    \ket{g}=\frac{1}{\sqrt{2}}\sqrt{1-\frac{h}{\sqrt{h^2+k^2}}}\ket{00}-\frac{1}{\sqrt{2}}\sqrt{1+\frac{h}{\sqrt{h^2+k^2}}}\ket{11}.
\end{equation}
Although $\ket{g}$ is the ground state of the total Hamiltonian $H$, it should be noted that it is neither a ground state nor an eigenstate of $H_n,V, H_n+V~(n=1,2)$. The essence of QET is to extract negative ground state energy of those local and semi-local Hamiltonians. Before the prover and verifier start QIP, they need to make sure that \eqref{eq:groundstate} is manifested and the prover must not change the parameters.  

Moreover, according to the formula~\eqref{eq:theta}, the angle $\theta$ is constrained by 
\begin{equation}
\label{eq:theta_minimal}
\begin{split}
\cos 2 \theta = \frac{h^2 + 2k^2}{\sqrt{(h^2 + 2 k^2)^2 + (hk)^2}}, \\
\sin 2 \theta= \frac{hk}{\sqrt{(h^2 + 2 k^2)^2 + (hk)^2}}.
\end{split}
\end{equation}

We present the full quantum circuit of QIP using the minimal QET model in Fig.~\ref{fig:circuit}. The idea of using QST for long-distance QET on a large-scale quantum network was first proposed by one of the authors~\cite{2023arXiv230111884I}.  

\begin{tcolorbox}
[colframe = blue!50!black, 
    colback = white, 
    title = Objective of the protocol]
Prover's goal is to encode a pair of positive real numbers $(k,h)$ into the quantum state $\ket{g(k,h)}$ and prove that it is correctly generated.
\end{tcolorbox}

First, the prover (Bob) chooses parameters $k$ and $h$ to create the ground state $\ket{g(k,h)}$. Then using QST the prover transfers the first qubit (witness) to the verifier (Alice) who observes the qubit in the $X$ basis and announces the result to the prover by a classical channel. $\mu=-1,+1$ appears with the equal probability, hence Bob cannot predict in advance which $\mu$ will appear. The prover performs a conditional operation on the second qubit according to the result of the verifier and sends it to the verifier by QST. The verifier computes $V+H_2$ using the $X$-basis and the $Z$-basis. (In Fig.~\ref{fig:circuit}, the verifier can compute $V$ by measuring $X_2$ in the $X$-basis.)

In the rest of this article, we confirm the following properties:
\begin{itemize}
\item (Zero-knowledge) Even if the verifier observes the state sent by the prover, no information about the ground state of the prover can be obtained.
\item (Completeness) If $\ket{\psi}$ is the ground state of the Hamiltonian~\eqref{eq:ham_min}, then there is a $\theta$ such that $\langle E_B\rangle<0$.
\item (Soundness) If $\ket{\psi}$ is not the ground state of the Hamiltonian~\eqref{eq:ham_min}, then the verifier observes positive energy no matter what state the prover uses, except for some small probability.
\end{itemize}

Once all verifiers have computed their local energies, either centralized or decentralized~\cite{ben2018scalable,9152704,10.1145/3133956.3134104,Kiktenko_2018}, a consensus can be aggregated on whether all verifiers accept the state of the prover.

\subsection{Zero-knowledge}
The purpose of this section is to show the process above is zero-knowledge proof in the sense that the verifier cannot obtain any knowledge of $\ket{g(k,h)}$, although one qubit of it is given. The inputs, variables and output results used in this protocol are $(k,h)$ to generate the ground state, Alice's projective measurement results $\mu$, $U_B(\mu,\theta)$ that Bob uses for his conditional operations, and the energy that Alice observes. Of these, only $(k,h)$ should be kept secret by Bob, and any other values disclosed would not lead to information about Bob's ground state~$\ket{g(k,h)}$.

Using the density matrix eq.~\eqref{eq:rho_QET}, we evaluate the expectation value $\langle H_1 \rangle=\Tr[\rho_\text{QET} H_1]$ and $\langle V \rangle=\Tr[\rho_\text{QET}V]$. The explicit formulas of them are obtained analitycally as follows: 
\begin{equation}
\label{eq:HandV}
\begin{split}
    \langle H_1\rangle &= \frac{h}{2\sqrt{h^2+k^2}} \Big[ 2k \sin 2 \theta + h (1 - \cos 2 \theta) \Big]\\
    \langle V \rangle &=\frac{2k}{\sqrt{h^2+k^2}} \Big[ -h\sin 2 \theta+ k (1 - \cos 2 \theta) \Big].
\end{split}
\end{equation}
By combing them, we find that $\langle E_B \rangle=\Tr[\rho_\text{QET}(H_1+V)]$ is 
\begin{equation}
\label{eq:Bob_energy}
\begin{split}
\langle E_B  \rangle =& \frac{1}{ \sqrt{h^2 + k^2}} \Big[ - k h \sin 2\theta  \\ &+ (h^2 + 2k^2 ) ( 1- \cos 2 \theta) \Big],
\end{split}
\end{equation}
which is equal to the analytical formula \eqref{eq:QET}.

We first confirm that converting from $(k,h)$ to $\theta$ is a one-way function, since there are uncountably many combinations of $k$ and $h$ taking the same $\sin\theta$. 

\begin{proposition}
Knowing $\theta$ only does not yield any knowledge of the ground state $\ket{g(k,h)}$.
\end{proposition}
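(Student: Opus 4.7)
The plan is to show that the map from parameters $(k,h)$ to $\theta$ given by \eqref{eq:theta_minimal} is strictly many-to-one, and that the fiber over a given $\theta$ contains physically distinct ground states. If both hold, then no function of $\theta$ alone can recover $\ket{g(k,h)}$, and the proposition follows.

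First, I would rewrite \eqref{eq:theta_minimal} in terms of the single ratio $u := h/k > 0$. Dividing numerator and denominator by $k^2$ gives
\begin{equation}
\cos 2\theta = \frac{u^2+2}{\sqrt{(u^2+2)^2 + u^2}}, \qquad \sin 2\theta = \frac{u}{\sqrt{(u^2+2)^2 + u^2}},
\end{equation}
so $\theta$ depends only on $u$, and any rescaling $(k,h)\mapsto(\lambda k,\lambda h)$ with $\lambda>0$ preserves $\theta$. However, the ground state \eqref{eq:groundstate} also depends only on $h/\sqrt{h^2+k^2} = u/\sqrt{u^2+1}$, so scale invariance alone does not separate ground states and a sharper argument is needed.

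Next I would analyze $f(u) = \tan 2\theta = u/(u^2+2)$ on $(0,\infty)$. Elementary calculus shows $f$ strictly increases on $(0,\sqrt{2})$, attains its maximum at $u=\sqrt{2}$, and strictly decreases on $(\sqrt{2},\infty)$. Solving $f(u_1)=f(u_2)$ directly yields $u_1 u_2 = 2$, so each value of $\theta$ in the interior of the image admits two distinct preimages $u_1$ and $2/u_1$. A brief substitution confirms that $\cos 2\theta$ and $\sin 2\theta$ (not just their ratio) agree at $u_1$ and $2/u_1$, so the whole angle $\theta$ is preserved under this involution. Plugging into $u/\sqrt{u^2+1}$, which is strictly monotonic on $(0,\infty)$, shows that the amplitudes of $\ket{00}$ and $\ket{11}$ in \eqref{eq:groundstate} differ whenever $u_1 \neq 2/u_1$. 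Hence two genuinely distinct ground states sit over the same $\theta$.

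Combining the continuous scaling symmetry with the discrete involution $u \mapsto 2/u$ produces a two-parameter family of $(k,h)$ sharing any given $\theta$, containing at least two physically inequivalent ground states. The verifier thus learns only the preimage equivalence class and cannot, even in principle, distinguish the two branches of the fiber. The only routine obstacle I foresee is the explicit check that the involution $u \leftrightarrow 2/u$ preserves both trigonometric components separately rather than merely their quotient; this reduces to the one-line identity $(2/u)^2 + 2 = 2(u^2+2)/u^2$, which I would record but not dwell on.
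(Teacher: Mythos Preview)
Your argument is correct and in fact more careful than the paper's own proof. The paper simply asserts that, for each $\theta$, the set of admissible $(k,h)$ is uncountable; this is nothing more than the scaling symmetry $(k,h)\mapsto(\lambda k,\lambda h)$. You correctly observe that this scaling also leaves the ground state \eqref{eq:groundstate} invariant, since $\ket{g(k,h)}$ depends only on the ratio $u=h/k$, so an uncountable $(k,h)$-fiber does \emph{not} by itself establish ambiguity of the ground state. Your additional step---the involution $u\mapsto 2/u$, which fixes both $\cos 2\theta$ and $\sin 2\theta$ while changing the strictly monotone quantity $u/\sqrt{u^2+1}$ appearing in $\ket{g}$---is exactly what is needed to close this gap. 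In that sense your route is genuinely stronger: the paper's argument proves non-injectivity of $(k,h)\mapsto\theta$, whereas yours proves non-injectivity of $\ket{g}\mapsto\theta$, which is the actual content of the proposition.

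Two small points. First, the phrase ``two-parameter family of $(k,h)$'' is a slip: the fiber over a generic $\theta$ is the union of two rays in the $(k,h)$-plane, hence one-dimensional, carrying exactly two distinct ground states. Second, at the single value $u=\sqrt{2}$ the involution is trivial and only one ground state survives in the fiber; your argument does not cover this isolated boundary point, though this does not affect the generic statement. One could also remark that reducing an a priori continuum of possible ground states to exactly two is, strictly speaking, \emph{some} knowledge rather than none---but the paper's informal use of ``any knowledge'' clearly intends ``cannot determine $\ket{g}$,'' and on that reading your proof delivers what is claimed.
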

\begin{proof}
This statement follows from the following fact: for any given $\theta\in[0,2\pi)$, the set 
\begin{equation}
\left\{(k,h):\sin\theta=\frac{hk}{\sqrt{(h^2+2k^2)^2+(kh)^2}},k>0,h>0\right\},    
\end{equation}
 is an uncountable set. 
\end{proof}

This means that Bob may disclose information about $\theta$. Our protocol does not use this property, but it would be beneficial to apply this study to another case.

For our protocol to be executed with zero knowledge, the following statement is critical.
\begin{theorem}
Knowing the measurement value of $X_1X_2$ and $Z_1$ does not yield any knowledge of the ground state $\ket{g(k,h)}$.
\end{theorem}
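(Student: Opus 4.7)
The plan is to mirror the preceding proposition: show that the map from $(k,h)$ to the statistics of the two observables $X_1X_2$ and $Z_1$ has uncountable fibres, so that no amount of data about these two observables can pin down the prover's parameters. I would carry this out in three steps.

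First, I compute the relevant expectations on whichever state Alice has access to — the bare ground state $\ket{g(k,h)}$ of eq.~\eqref{eq:groundstate}, or the post-protocol mixture $\rho_\text{QET}$ of eq.~\eqref{eq:rho_QET}. On $\ket{g}$ one reads off $\langle Z_1\rangle_g=-h/\sqrt{h^2+k^2}$ and $\langle X_1X_2\rangle_g = 2ab = -k/\sqrt{h^2+k^2}$. On $\rho_\text{QET}$ the reduced state on qubit~1 is dephased in the $\sigma_A$-basis by $P_A$, so $\langle Z_1\rangle_{\rho_\text{QET}}=0$ identically; $\langle X_1X_2\rangle_{\rho_\text{QET}}$ becomes a rational expression in $(k,h)$ after substituting $\theta$ from eq.~\eqref{eq:theta_minimal}, obtained via the identity $\sum_\mu\mu\,P_A(\mu)X_2P_A(\mu)=X_1X_2$ together with the Heisenberg evolution of $X_2$ under $U_B(\mu)$.

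Second, and this is the decisive point, every one of these quantities is homogeneous of degree zero in $(k,h)$. Under the scaling $(k,h)\mapsto(\lambda k,\lambda h)$ with $\lambda>0$ the ratio $h/\sqrt{h^2+k^2}$ is preserved, so $\ket{g(k,h)}$ itself is unchanged; the formulas in eq.~\eqref{eq:theta_minimal} are unchanged, so $\theta$ is unchanged; and consequently $\rho_\text{QET}$ and all of the above expectations are invariant along rays through the origin.

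Third, I conclude exactly as in the preceding proposition: for any observed pair $(m_1,m_2)$, the set
\[
\{(k,h)\in\mathbb{R}_{>0}^2 \,:\, \langle X_1X_2\rangle=m_1,\ \langle Z_1\rangle=m_2\}
\]
contains the entire ray $\{(\lambda k_0,\lambda h_0):\lambda>0\}$ through any one solution $(k_0,h_0)$ and is therefore uncountable, so the measurement data cannot recover $(k,h)$. The only step with genuine algebra is the simplification of $\langle X_1X_2\rangle_{\rho_\text{QET}}$ after inserting $\theta$ from eq.~\eqref{eq:theta_minimal}; I expect that bookkeeping to be the main obstacle. Strengthening the conclusion from expectation values to the full joint outcome distribution is then automatic, since the scale invariance holds at the level of the density matrix itself.
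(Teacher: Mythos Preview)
Your route differs from the paper's in mechanism, though not in conclusion. The paper fixes $\theta$ as an \emph{independent} third parameter, writes $\rho(k,h,\theta)$, and asserts that for each fixed $\theta$ the level sets $\{(k,h):\Tr[\rho(k,h,\theta)X_1X_2]=V\}$ and $\{(k,h):\Tr[\rho(k,h,\theta)Z_1]=Z\}$ are uncountable, pointing to the formulas in eq.~\eqref{eq:HandV}. You instead substitute $\theta=\theta(k,h)$ and invoke the scale symmetry $(k,h)\mapsto(\lambda k,\lambda h)$ to produce a ray in every fibre. Both arguments reach ``$(k,h)$ cannot be recovered'', and in fact if one checks the paper's formulas they are also degree-zero homogeneous in $(k,h)$ for fixed $\theta$, so its uncountability is the same ray degeneracy you make explicit. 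Your additional observation that $\langle Z_1\rangle_{\rho_{\text{QET}}}=0$ identically (because $P_A$ dephases qubit~1 in the $X$-basis) is sharper than anything the paper states for that observable.

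There is, however, a genuine gap you should be aware of, and your own step two exposes it. You note that $\ket{g(k,h)}$ is unchanged along the ray $(\lambda k,\lambda h)$. Consequently every one of the uncountably many $(k,h)$ your argument produces corresponds to the \emph{same} ground state. The scaling degeneracy therefore proves only that the verifier cannot recover the overall scale of $(k,h)$---which is vacuous, since nothing in the protocol depends on it---and says nothing about whether she can recover the ratio $k/h$, which is precisely what determines $\ket{g}$. Read literally, the theorem claims no knowledge of $\ket{g}$ is leaked; the homogeneity argument by itself cannot establish that, because it never moves between distinct ground states. To close the gap you would need a second degeneracy transverse to the rays (e.g.\ distinct ratios $k/h$ giving the same $\langle X_1X_2\rangle_{\rho_{\text{QET}}}$), and your scaling step does not supply one. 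The paper's proof, which also bottoms out in degree-zero homogeneity, has the same weakness---but your version makes it visible, so you should either address it or explicitly adopt the paper's convention of identifying ``knowledge of $\ket{g(k,h)}$'' with ``knowledge of the pair $(k,h)$''.
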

\begin{proof}
This can be explained as follows. 
Let $\rho(k,h,\theta)=\rho_{QET}$~\eqref{eq:rho_QET} be the density matrix obtained after operations by Alice and Bob. For any given $\theta\in[0,2\pi)$ and any given $V,Z\in\mathbb{R}$, the sets 
\begin{align}
&\left\{(k,h):V=\Tr[\rho(k,h,\theta) X_1X_2],k>0,h>0\right\}\\
&\left\{(k,h):Z=\Tr[\rho(k,h,\theta) Z_1],k>0,h>0\right\}
\end{align}
 are uncountable sets, if they are not empty. The right hand sides of these conditions are given in eq.~\eqref{eq:HandV}. Therefore the verifier can know those values statistically, but it is impossible to know $(k,h)$ with those values.
\end{proof}

Since the set of combinations of $(k,h)$ that output the same value as the teleported energy $\langle E_B\rangle$~\eqref{eq:Bob_energy} is also an uncountable set, the map from $(k,h)$ to the total energy that Alice observes is also a one-way function.

Admitting that Alice cannot get any knowldge about Bob's input by her projective measurement, we can summarize our statements as follows.
\begin{theorem}
    As long as the input values are kept secret, no information about the ground state $\ket{g(k,h)}$ can be obtained by any observation through the QET protocol.
\end{theorem}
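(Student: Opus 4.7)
The plan is to treat this final theorem as a consolidation of the two preceding results and show that the complete transcript visible to Alice consists only of quantities whose preimage in the parameter space $(k,h)$ is uncountable. First I would enumerate everything Alice can in principle observe during repeated runs of the protocol: her own choice of measurement operator $\sigma_A$, her random bit outcome $\mu\in\{-1,+1\}$, the announcement of the basis Bob asks her to measure in, the statistics of her final measurements on the second qubit, and (under the most generous threat model) the angle $\theta$ that parameterizes $U_B(\mu,\theta)$. Her only nontrivial information-carrying observables are therefore $\theta$ and the empirical estimates of $\langle H_1\rangle$ and $\langle V\rangle$ (equivalently $\langle E_B\rangle$).

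Next I would invoke the earlier results directly. By the preceding proposition, the set of $(k,h)$ consistent with a given $\theta$ is uncountable; by the preceding theorem, the same is true of the sets consistent with given values of $\Tr[\rho(k,h,\theta)X_1X_2]$ and $\Tr[\rho(k,h,\theta)Z_1]$, as well as the remark immediately after (that the preimage of a fixed $\langle E_B\rangle$ is uncountable). Since the $\mu$'s are uniformly random and independent of $(k,h)$ by the structure of $P_A(\mu)$, and since the QST steps reveal nothing beyond what is already encoded in the shipped qubits (by the standard no-cloning/no-signalling argument for teleported states), Alice's transcript reduces information-theoretically to the tuple $(\theta,\langle H_1\rangle,\langle V\rangle)$.

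The core step is then to show that even the joint preimage
\begin{equation}
S=\{(k,h):k,h>0,\ \theta(k,h)=\theta^\star,\ \Tr[\rho\,H_1]=E_1^\star,\ \Tr[\rho\,V]=E_V^\star\}
\end{equation}
is either empty (in which case Alice's data are inconsistent and she learns nothing about a valid encoding) or uncountable. I would argue this by noting that $\theta$ is a function of the single ratio $k/h$ via \eqref{eq:theta_minimal}, so fixing $\theta^\star$ cuts the two-dimensional parameter region down to a one-parameter ray $\{(\lambda k_0,\lambda h_0):\lambda>0\}$. On that ray, $\langle H_1\rangle$ and $\langle V\rangle$ in \eqref{eq:HandV} become functions of $\lambda$ alone, and a direct scaling inspection of those expressions shows they depend only on the common ratio while being otherwise independent of the overall scale $\lambda$ (the $h/\sqrt{h^2+k^2}$ and $k/\sqrt{h^2+k^2}$ prefactors are scale-invariant). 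Hence once $\theta^\star$ is fixed, $E_1^\star$ and $E_V^\star$ are automatically fixed as well, and the whole ray lies in $S$, making $S$ uncountable. The conclusion that no information about $\ket{g(k,h)}$ is extractable then follows immediately, since the posterior over $(k,h)$ remains supported on an uncountable set.

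The main obstacle I anticipate is the last step: verifying carefully that the scaling argument really does collapse the two observables $(\langle H_1\rangle,\langle V\rangle)$ to functions of $\theta$ alone, so that the joint preimage does not accidentally shrink to a point. If that scaling fails in some regime, the fallback is a dimension-counting argument (two smooth constraints on a two-dimensional domain generically cut out a finite set, but here the constraints are analytically dependent through \eqref{eq:theta_minimal} and \eqref{eq:HandV}, so the zero set is still a one-dimensional subvariety and hence uncountable). Either route converts the qualitative statement into the precise claim that Alice's posterior on $(k,h)$, and therefore on $\ket{g(k,h)}$, cannot be sharpened to a single element of the parameter space.
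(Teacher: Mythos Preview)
Your consolidation strategy matches the paper's intent: the paper offers no proof environment for this theorem and simply presents it as a summary of the preceding proposition and theorem. Your plan to enumerate Alice's transcript and analyze the \emph{joint} preimage is more careful than what the paper does.

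There is, however, a concrete error in the core scaling step. You assert that $\langle H_1\rangle$ and $\langle V\rangle$ in \eqref{eq:HandV} are independent of the overall scale $\lambda$ along the ray $(h,k)=(\lambda h_0,\lambda k_0)$. They are not. The prefactors $h/\sqrt{h^2+k^2}$ and $k/\sqrt{h^2+k^2}$ are indeed scale-invariant, but the square brackets $[2k\sin 2\theta+h(1-\cos 2\theta)]$ and $[-h\sin 2\theta+k(1-\cos 2\theta)]$ are homogeneous of degree one, so $\langle H_1\rangle$ and $\langle V\rangle$ scale \emph{linearly} in $\lambda$. Hence if Alice really had the tuple $(\theta^\star,E_1^\star,E_V^\star)$ as you write it, the set $S$ would be a single point: $\theta^\star$ pins down the ratio $k/h$, and then $E_1^\star$ is a nonconstant linear function of $\lambda$ on that ray, fixing $\lambda$ uniquely. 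Your fallback dimension count does not rescue this, because on the ray the constraint from $E_1^\star$ is genuinely independent, not analytically redundant.

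The repair is to be precise about what Alice actually measures. She never observes $H_1$ or $V$ directly --- those operators carry the unknown couplings $h,k$. What she can accumulate are the bare Pauli expectations $\Tr[\rho\,Z_1]$ and $\Tr[\rho\,X_1X_2]$, which is exactly the content of the preceding theorem you invoke. These \emph{are} scale-invariant; for instance
\[
\Tr[\rho\,X_1X_2]=\frac{-h\sin 2\theta-k\cos 2\theta}{\sqrt{h^2+k^2}},
\]
which depends only on $k/h$. With the observables in $S$ replaced by the bare Pauli values (and $\langle E_B\rangle$ dropped, since Alice cannot form it without $h,k$), the entire ray does lie in $S$, and your uncountability conclusion stands. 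So the architecture of your argument is right; you just need to swap $H_1,V$ for the quantities Alice can actually see.
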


\begin{figure*}
    \centering
    \includegraphics[width=0.49\linewidth]{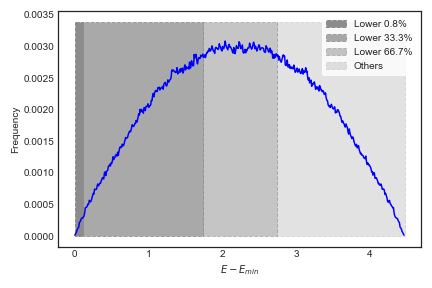}
    \includegraphics[width=0.49\linewidth]{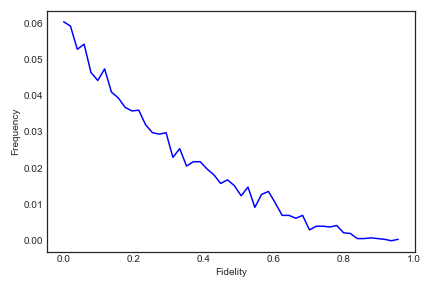}
    \caption{[Left] Histogram of the energy computed by the 500 random unitary operators that generate initial states. The energy was computed for the Hamiltonian $(k=h=1)$ by 600 different $\theta$s sampled between 0 and $2\pi$ at equal intervals. $E_\text{min}$ corresponds to the lowest energy with the exact ground state at $k=h=1$. Only the lower 0.8\% of the total samples generate negative energy value. [Right] Histogram of the fidelity between the exact ground state and the random initial states.}
    \label{fig:histgram}
\end{figure*}
\subsection{Analysis on Completeness and Soundness}
To achieve a QIP, we need to make sure that the state $\ket{g}$ is accepted with a high probability if the correct process is executed and rejected with a high probability otherwise. Let us assume that quantum communication via QST is completely secure and there are no errors. 

\subsubsection{\textbf{Completeness}}
The proof of completeness is straightforward, using the analytical formula of the teleported energy~\eqref{eq:QET}, which is generally negative as long as Bob generates the correct ground state and uses the correct parameters $\mu,\theta$. In Fig. \ref{eminF}, we plot the sensitivity of the teleported energy with respect to $\theta$~determined by eq.\eqref{eq:theta_minimal} (or equivalently by eq.~\eqref{eq:theta}). This is the case where he uses the correct ground state but puts a slightly different $\theta$ in his conditional operation~\eqref{eq:operation}. In the figure, $\delta$ is a difference from the correct $\theta$. It is clear that $\theta$ can not be chosen randomly, as there is functional dependence on $\delta$ and therefore $\theta$. However, the result shows that the teleported energy is robust to minute fluctuations in $\theta$ and does not significantly affect the verification of the proof. For example, Alice should accept Bob's input only if she observes negative energy. 

\begin{figure}[H]
\centering
    \includegraphics[scale=0.6]{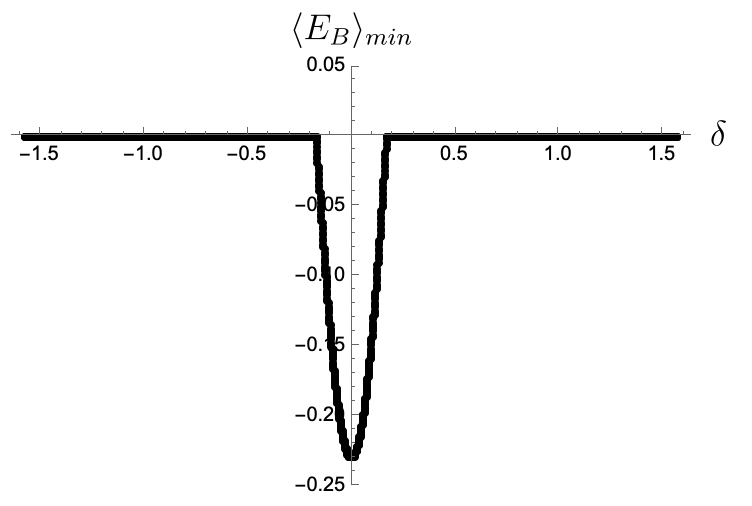}
    \caption{This figure shows the minimum value for $\langle E_B \rangle$ as a function of $\delta$. When $\delta=0$, it corresponds to the correct $\theta$.}
    \label{eminF}
\end{figure}

\subsubsection{\textbf{Soundness}}
Finally we check soundness. We show Alice observes negative energy if Bob sends her the correct ground state of a Hamiltonian and he uses the correct $\theta$. For this let us investigate how successful an attack on the protocol would be if Bob cannot generate the correct ground state~\eqref{eq:groundstate}. In order to confirm soundness, random initial states will be used in order to check that the average energy is only negative for the correct ground state. The random initial state will be of the form
\begin{equation}
\label{eq:random}
    \ket{\psi} = U\ket{00},
\end{equation}
where $U$ is a unitary matrix.  Using this, the random density matrix after conditional operations become
\begin{equation}
    \rho(\theta,\mu) = U_B(\theta) P_A(\mu) \ket{\psi}\bra{\psi} P_A (\mu) U_{B}^{\dagger}(\theta),
\end{equation}
where it is unknown what constraints are placed on $\theta$. To compute the average for Bob's energy, the result is
\begin{equation}
\label{eq:random_energy}
     \langle E_B (h,k,\theta,\mu) \rangle = \tr \Big[ \rho(\theta,\mu) (H_1(h,k) + V(h,k)) \Big].
\end{equation}

Moreover to evaluate the distance between the exact ground state, we consider the fidelity defined by
\begin{equation}
\label{eq:fidelity}
    F(h,k) = | \braket{g(h,k) | \psi} |^2.
\end{equation}
For a given choice of $h,k$, the fidelity can be plotted against Bob's energy for the same choice of $h,k$. This gives a visualization of how close the random initial states are to the correct ground state.

To evaluate Bob's attack, we generated 500 random unitary matrices $U$~\eqref{eq:random} according to the Haar measure. For $\theta$ in Bob's operation $U_B(\mu,\theta)$, 600 points were sampled at equal intervals from 0 to $2\pi$. Therefore it contains data corresponding to 300,000 random attacks on the protocol. We are assuming that Bob does not have the ground state~\eqref{eq:groundstate}, so he does not know about the correct $\theta$~\eqref{eq:theta} either. In the left panel of  Fig.~\ref{fig:histgram}, we show the histogram of the energy $E-E_\text{min}$ that Alice observes. Here $E$ is as given by eq.\eqref{eq:random_energy} and $E_\text{min}$ is the smallest negative energy ~\eqref{eq:QET} that Alice can see if Bob uses the correct ground state. Of the 300000 data sampled, only 0.8\% had negative $\langle E_B (h,k,\theta,\mu) \rangle$, which is indicated by the dark-colored area in the figure. Therefore Alice can assume that Bob has generated the appropriate ground state with a very high probability.

Furthermore, in the right panel of Fig.~\ref{fig:histgram}, we plot the histogram of the fidelity~\eqref{eq:fidelity} between the random states~\eqref{eq:random} and the exact ground state~\eqref{eq:groundstate}. We can also see that, although it is only a two-qubit state,  the possibility of approximating the proper ground state with a random state is close to zero. 

In summary, If Bob can generate the appropriate ground state, Alice can observe negative energy with high probability, so Bob's proof is accepted with high probability (Completeness). Additionally, if Bob is unable to generate a proper ground state, Alice will observe positive energy with a sufficiently large probability, so the chance that the proof will be correctly accepted even though Bob does not have a ground state is very small (Soundness). 

\begin{figure*}
    \centering
    \includegraphics[width=0.49\linewidth]{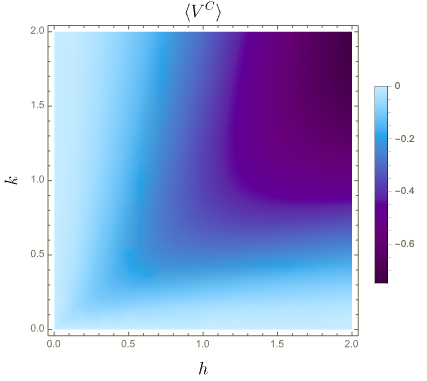}
\includegraphics[width=0.49\linewidth]{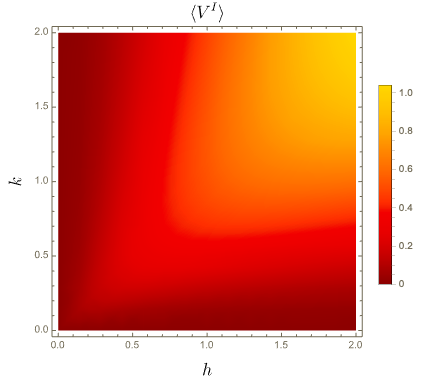}
\includegraphics[width=0.49\linewidth]{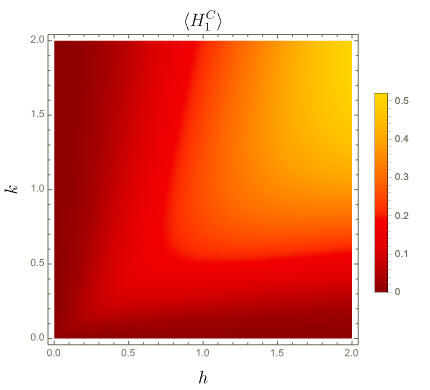}
\includegraphics[width=0.49\linewidth]{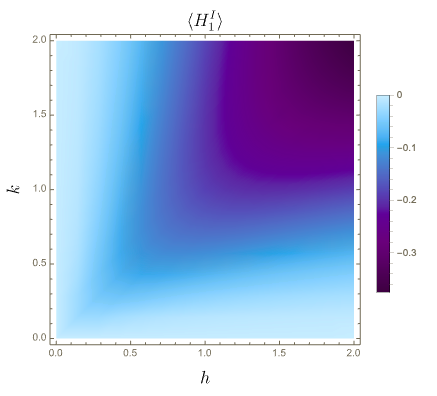}
    \caption{Expectation values of the interaction terms $\langle V^{C}\rangle,\langle V^{I}\rangle$
    and the single qubit operators $\langle H_{1}^{C} \rangle,\langle H_{1}^{I} \rangle$. See eq.~\eqref{eq:expectation_val} for the definition. $\langle V^{C}\rangle$ and $\langle H^{I}\rangle$ are always negative, whereas $\langle V^{I}\rangle$ and $\langle H^{C}\rangle$ are always positive, by which the prover can distinguish whether $Q_1$ or $Q_2$ was applied.}
    \label{QET_Lie_True}
\end{figure*}
\begin{table*}
    \centering
    \begin{tabular}{|c|c|c|c|c|}%\toprule
\cline{1-5}
& \multicolumn{1}{c}{} & \multicolumn{1}{c}{$\langle H_{1}^{C} \rangle = \sum_{\mu,\nu} \langle H_1(\mu=\nu)\rangle$}  &   \multicolumn{1}{c}{}&  \\ \hline
Shot Count & $(h,k)=(1,0.2)$&$(h,k)=(1,0.5)$& $(h,k)=(1,1)$ &$(h,k)=(1.5,1)$ \\\cline{1-5}
$n_\text{shot}=100$ & $0.0204$ & $0.1144$& $0.3072$ &$0.4380$ \\\cline{1-5}
$n_\text{shot}=1000$ & $0.0624$ & $0.1564$& $0.2152$ &$0.3600$ \\\cline{1-5}
$n_\text{shot}=10000$ & $0.0572$ & $0.1870$& $0.2422$ &$0.3666$ \\\cline{1-5}
Analytical value& 0.0521 &0.1873  &0.2598 & 0.3480 \\\hline%\toprule
& \multicolumn{1}{c}{} & \multicolumn{1}{c}{$\langle H_{1}^{I}\rangle = \sum_{\mu,\nu} \langle H_1(\mu \neq \nu)\rangle$}  &   \multicolumn{1}{c}{}&  \\ \hline
Shot Count & $(h,k)=(1,0.2)$&$(h,k)=(1,0.5)$& $(h,k)=(1,1)$ &$(h,k)=(1.5,1)$ \\\cline{1-5}
$n_\text{shot}=100$ & $-0.0194$ & $-0.0856$& $-0.2328$ &$-0.2218$ \\\cline{1-5}
$n_\text{shot}=1000$ & $-0.0194$ & $-0.0836$& $-0.1968$ &$-0.2070$ \\\cline{1-5}
$n_\text{shot}=10000$ & $-0.0192$ & $-0.0948$& $-0.1776$ &$-0.2044$ \\\cline{1-5}
Analytical value& -0.0193 &-0.0955 &-0.1873& -0.2058 \\\hline%\toprule
%\cline{1-6}
& \multicolumn{1}{c}{} & \multicolumn{1}{c}{$\langle V^{C} \rangle = \sum_{\mu,\nu} \langle V(\mu=\nu)\rangle$}  &   \multicolumn{1}{c}{}&  \\ \hline
Shot Count & $(h,k)=(1,0.2)$&$(h,k)=(1,0.5)$& $(h,k)=(1,1)$ &$(h,k)=(1.5,1)$ \\\cline{1-5}
$n_\text{shot}=100$ & $-0.0816$ & $-0.2728$& $ -0.4656$ &$-0.4436$ \\\cline{1-5}
$n_\text{shot}=1000$ & $-0.0712$ & $-0.2546$& $ -0.3618$ &$-0.5064$ \\\cline{1-5}
$n_\text{shot}=10000$ & $-0.0716$ & $-0.2644$& $-0.3802$ &$-0.5004$ \\\cline{1-5}
Analytical value& -0.0701 &-0.2599 & -0.3746 & -0.4906 \\\hline%\toprule
& \multicolumn{1}{c}{} & \multicolumn{1}{c}{$\langle V^{I} \rangle = \sum_{\mu,\nu}\langle V(\mu \neq \nu)\rangle$}  &   \multicolumn{1}{c}{}&  \\ \hline
Shot Count & $(h,k)=(1,0.2)$&$(h,k)=(1,0.5)$& $(h,k)=(1,1)$ &$(h,k)=(1.5,1)$ \\\cline{1-5}
$n_\text{shot}=100$ & $-0.0176$ & $0.2672$& $0.4142$ &$0.5494$ \\\cline{1-5}
$n_\text{shot}=1000$ & $0.0816$ & $0.3132$& $0.3422$ &$0.6294$ \\\cline{1-5}
$n_\text{shot}=10000$ & $0.0766$ & $0.2954$& $0.5250$ &$0.5942$ \\\cline{1-5}
Analytical value& 0.0727 &0.3058 &0.5198 & 0.6171 \\\hline
%\cline{1-6}
\end{tabular}
\caption{\label{tab:result}This is a comparison between the number of shot counts in the protocol for computing expectation values. For the protocol to work reliably, the number of shots should be greater than 1000. The errors $\sigma$ correspond to the values decrease with increasing the number of iterations $n_\text{shot}$. For example, for $n=100$, $\sigma \sim 0.1$, and for $n_\text{shot}=1000$, $\sigma \sim 0.01$.}
\end{table*}

\section{Applications: Quantum State Distinguishability}
It is widely known that the Quantum Circuit Distinguishability (QCD) is a QIP-complete problem~\cite{1443098} and that the Quantum State Distinguishability (QSD) is a QSZK (Quantum Statistical Zero-Knowledge)-complete problem~\cite{1181970}. Based on QET, now let us consider a simple QSD problem as follows. The corresponding quantum circuit can be easily drawn as presented in Fig.~\ref{fig:circuit}.

\begin{algorithm}
\caption{QSD by Quantum Energy Teleportation}\label{alg:cap}
\hspace*{\algorithmicindent} \textbf{Input} The ground state $\ket{g}$ of a Hamiltonian~\eqref{eq:condition}\\
    \hspace*{\algorithmicindent} \textbf{Output} $\rho_1$ or $\rho_2$ (equivalently $Q_1$ or $Q_2$)
\begin{algorithmic}
\Require $n_\text{shot}>0$
\For{$n <n_\text{shot}$}
\State Prover measures $X_1$, gets $\mu\in\{-1,1\}$ and tells $\mu$ to the verifier. 

\State Verifier applies $Q_1=U(+\mu)$ or $Q_2=U(-\mu)$ randomly to the sate
\State Prover measures $X_2$ and $Z_2$
\EndFor 
\If{$\langle E_{B}\rangle<0$}
\State Prover finds $Q_1$ was applied.  
\Else
\State Prover finds $Q_2$ was applied.  
\EndIf
\end{algorithmic}
\end{algorithm}

\begin{enumerate}
    \item Prover creates the ground state and performs the projective measurement on the first qubit. 
    \item Prover sends the second qubit, the angle $\theta$ and the measurement value $\mu$ to the verifier.  
    \item Verifier selects $Q_1=U_B(\mu)$ or $Q_2=U_B(-\mu)$ at random, applies it to the qubit and sends it back to the prover.  
    \item Prover measures the local energy and tells the verifier whether the verifier applied $Q_1$ or $Q_2$.
    \item Verifier accepts if the answer is correct and rejects otherwise. 
\end{enumerate}
In step 3) of the process, the verifier randomly decides whether to use (faithful) or not use (non-faithful) the same value of $\mu$ as that conveyed by the prover. This choice of faithful or non-faithful shall be made only once at the beginning and cannot be changed thereafter. Thus, if the verifier is faithful to the prover every time, state $\rho_1$ is obtained, while if the verifier is not faithful every time, state $\rho_2$ is obtained:
\begin{align}
\begin{aligned}
    \rho_1&=\sum_{\mu\in\{-1,1\}}U_B(\mu)P_A(\mu)\ket{g}\bra{g}P_A(\mu) U_B(\mu)\\
    \rho_2&=\sum_{\mu\in\{-1,1\}}U_B(-\mu)P_A(\mu)\ket{g}\bra{g}P_A(\mu)U_B(-\mu)
\end{aligned}
\end{align}
The prover's goal is to distinguish the difference between $\rho_1$ and $\rho_2$ using QET. In the following, we explain that in the case of the minimal model, the prover can distinguish the states quite accurately. For the smallest model, the certifier can do this by measuring only either $Z_2$ or $X_1X_2$.

Let $\mu$ be the prover's measurement result and $\nu$ be the verifier's value for the operation
\begin{equation}
U_{B} (\nu,\theta) = \cos \theta I - i \nu \sin \theta \sigma_{B}.
\end{equation}
If $\mu=\nu$, it is equal to the operator discussed before~\eqref{eq:operation}.

After the prover's post-measurement, the result is
\begin{equation}
\rho' (\mu) = \frac{1}{p_A(\mu)}P_A (\mu) \ket{g} \bra{g} P_A (\mu)
\end{equation}
and after the verifier's operation on this state, the density matrix becomes
\begin{equation}
\rho (\mu, \nu) = U_B (\nu) \rho' (\mu) U_{B}^{\dagger} (\nu).
\end{equation}

We evaluate the expectation value $\langle H_1 (\mu,\nu) \rangle=\Tr[\rho (\mu, \nu) H_1]$ and $\langle V(\mu,\nu) \rangle=\Tr[\rho (\mu, \nu) V]$. The average of $H_1$ is found to be
\begin{equation}
\begin{split}
    \langle H_1 (\mu,\nu) \rangle = \frac{h}{4\sqrt{h^2+k^2}} \Big[ 2k \mu \nu \sin 2 \theta+ h (1 - \cos 2 \theta) \Big].
\end{split}
\label{Hav}
\end{equation}
When $\mu = \nu$ as in the standard QET protocol, the correct results are obtained (with a factor of 1/2 as the summation is relaxed), and $\langle H_1(\mu,\nu) \rangle$ is always non-negative. On the other hand,  $\langle H_1(\mu,\nu) \rangle$ can be negative when $\mu\neq\nu$. For the average of $V$, the result is
\begin{equation}
\begin{split}
    \langle V (\mu,\nu) \rangle =& \frac{k}{\sqrt{h^2+k^2}} \Big[ -h \mu \nu \sin 2 \theta + k (1 - \cos 2 \theta) \Big].
\end{split}
\label{Vav}
\end{equation}
$\langle V (\mu,\nu) \rangle$ is always non-positive when $\mu=\nu$, but is always non-negative when $\mu\neq\nu$. Combing these two formulas, we find that $\langle E_B (\mu,\nu) \rangle=\Tr[\rho(\mu,\nu)(H_1+V)]$ is 
\begin{equation}
\begin{split}
\langle E_B (\mu, \nu) \rangle =& \frac{1}{2 \sqrt{h^2 + k^2}} \Big[ - k h \mu \nu \sin 2\theta  \\ &+ (h^2 + 2k^2 ) ( 1- \cos 2 \theta) \Big].
\end{split}
\end{equation}
One can confirm that $\langle E_B (\mu, \nu) \rangle$ is negative in general for $\mu=\nu$ and positive in general for $\mu\neq\nu$.

For an operator $O\in\{V,H_1\}$, we define the expectation value by
\begin{align}
\label{eq:expectation_val}
\begin{aligned}
    \langle O^C\rangle &=\sum_{\mu\in\{-1,1\}}\langle O(\mu,\mu)\rangle\\
    \langle O^I\rangle &=\sum_{\mu\in\{-1,1\}}\langle O(\mu,-\mu)\rangle
\end{aligned}
\end{align}
and plot in Fig.~\ref{QET_Lie_True}. One can find that $\langle V^{C}\rangle$ and $\langle H_1^{I}\rangle$ are always negative, whereas $\langle V^{I}\rangle$ and $\langle H_1^{C}\rangle$ are always positive. Therefore, this gives a direct operational protocol for Bob to determine whether $\rho_1$ or $\rho_2$ was created by Alice's operation $Q_1$ or $Q_2$.

Here we discuss the number of repetitions of operations required to execute the protocol. For this, we used \textit{qasm\_simulator} available in Qiskit package to simulate the protocol with the circuit. Our result is summarized in Table~\ref{tab:result}. The $\mu=\nu$ case is essentially the same as given in the previous work by one of the authors~\cite{Ikeda:2023uni}. It can be seen from Table that with an ideal quantum circuit, it takes at most 1000 calculations (approximately $O(500)\mu s$) for Bob to distinguish the states.

\section{\label{sec:QMIP}Quantum Multi-Prover Interactive Proofs and Future Direction}
The work presented in this paper offers tangible practical benefits for society in the quantum technological age. Utilizing a modified version of QET as a protocol to determine verification and trust across a network reveals a technique that can increase security, whilst also decreasing the need for current cryptographic techniques to be adapted for quantum hardware. 

Extending our protocol to distributed QIPs where multiple ($N$) verifiers are distributed on a quantum network is straightforward. It is also straightforward to extend our protocol to Quantum Multi-Prover Interactive Proofs with entangled provers (QMIP$^*$), by introducing many energy suppliers into the system. Technically, provers delegate their operation for supplying energy to a single verifier (Alice), who performs projective measurements on the qubits teleported by the provers (See Steps 1 to 3 in Fig.~\ref{fig:QMIP}. This is indeed a QMIP$^*$ protocol since the ground state of a many-body quantum system is entangled in general, and provers do not communicate after Alice's post measurements. If each prover is far enough away from the other, their local Hamiltonians have no overlap with each other, so each can independently perform the proof. Since NEXPTIME$\subsetneq$QIMP$^*$, our protocol is ultimately secure even though provers have the best quantum computational resources. Since all proofs can be performed in parallel computation, the total execution time is the same as for a single prover. Remember that the authentication takes only $O(500)~\mu s$ (evaluated by the CNOT duration.) even if the provers and the verifier communicate many times. Zero-knowledge and Completeness can be shown as done in the previous section. We leave the rigorous proof of Soundness as an open question for subsequent studies.

\begin{tcolorbox}
[colframe = blue!50!black, 
    colback = white, 
    title = Quantum Multi-Prover Interactive Proofs with entangled provers (QMIP$^*$) using QET and QST]
\begin{itemize}
    \item[1.] All provers agree on a Hamiltonian and generate the entangled ground state $\ket{g}$.
    \item[2.] Each of them sends Alice a single qubit from their local subsystem by QST.
    \item[3.] Alice performs projective measurements to each of the qubits sent and announces her results to them.
    \item[4.] Each verifier performs a conditional operation to another qubit in their local subsystem and sends it to Alice by QST.
    \item[5.] Alice calculates the local energy and decides whether to accept each of them and the ground state $\ket{g}$ based on that energy.
\end{itemize}
\end{tcolorbox}
\begin{figure}[H]
    \centering
    \includegraphics[width=\linewidth]{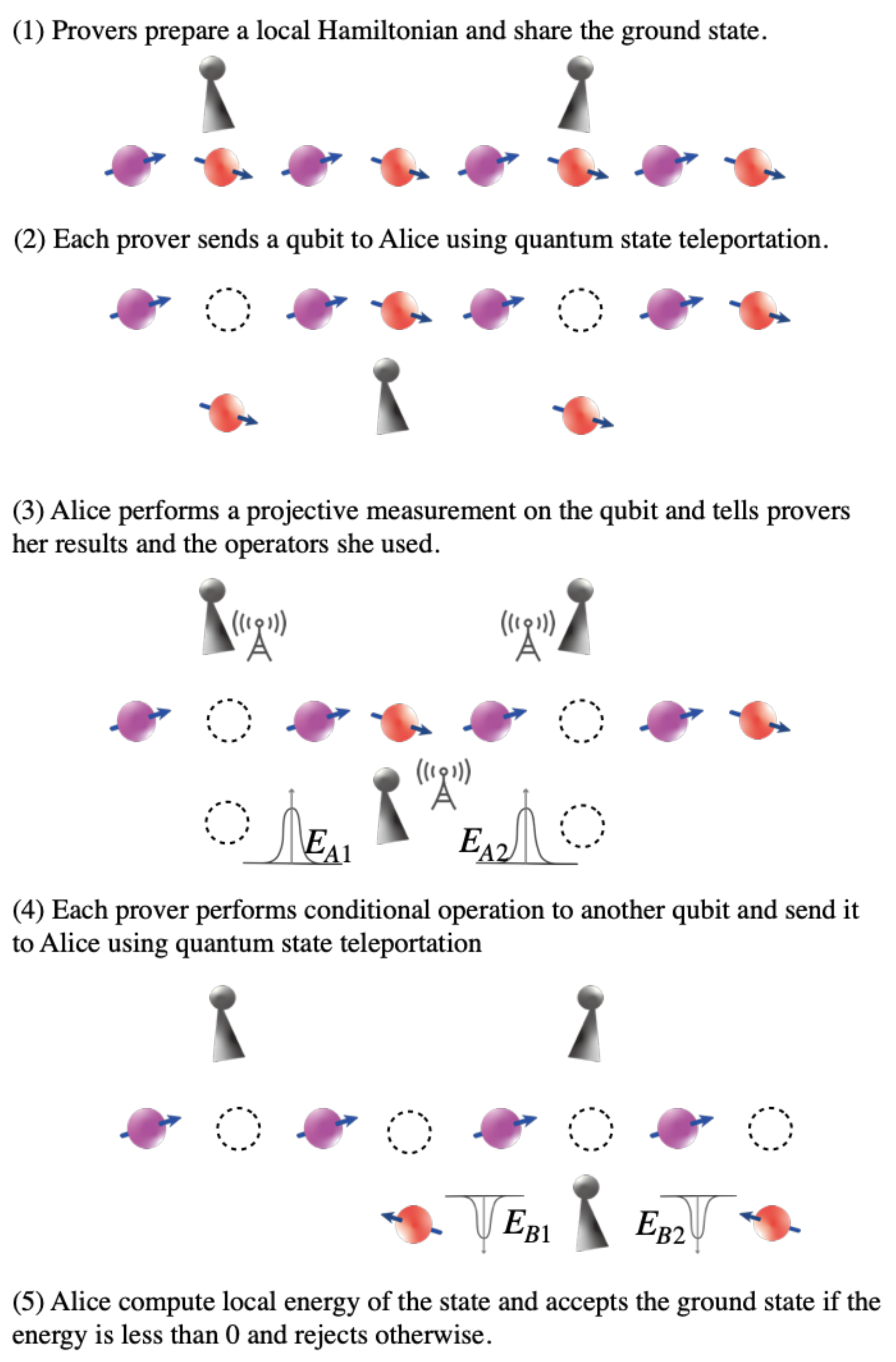}
    \caption{Schematic picture of Quantum Multi-Prover Interactive Proof systems.}
    \label{fig:QMIP}
\end{figure}

\section{Conclusion}
In summary, this paper presents a novel QIP, which utilizes the properties of QET and QST such that information transmitted across a quantum network can be verified without the receiving party having full knowledge of the other node's system. We showed analytically how this protocol works in the minimal scenario, by verifying the soundness, completeness, and its zero-knowledge properties. Importantly, the general protocol can be extended up to $N$-parties as previously discussed. This work constitutes a zero-knowledge proof, which will have significant implications for future quantum network design, cryptography and secure quantum communication.

\section*{Acknowledgement}
The work of KI was supported by the U.S. Department of Energy, Office of Science, National Quantum Information Science Research Centers, Co-design Center for Quantum Advantage (C2QA) under Contract No.DESC0012704. We acknowledge the use of IBM quantum simulators and Qiskit. %AL thanks Igor Yurkevich and David Lowe for insightful discussions about the implementation of quantum mechanics in game theory. 

\bibliographystyle{IEEEtran}
\bibliography{ref}
\end{document}